\newcommand{\eqdef}{ := }
\newtheorem{theorem}{Theorem}
\newcommand{\egd}{{e.g.}, }
\newcommand{\ied}{{i.e.}, }
\renewcommand{\baselinestretch}{0.96}
\begin{document} 
\title {\huge On Downlink Interference Decoding In Multi-Cell Massive MIMO Systems \vspace*{-4mm}} 
\author{Meysam~Shahrbaf~Motlagh\thanks{\noindent\textsuperscript{$\dagger$}Meysam Shahrbaf Motlagh and Patrick Mitran are with the University of Waterloo, ON, Canada.}\textsuperscript{$\dagger$},~\textrm{Subhajit~Majhi\thanks{\noindent\textsuperscript{$\ast$}Subhajit Majhi contributed to this work while he was affiliated with the University of Waterloo, ON, Canada, and is currently with Qualcomm India.}}\textsuperscript{$\ast$},~\textrm{Patrick~Mitran}\textsuperscript{$\dagger$}~\textrm{and}~\textrm{Hideki~Ochiai\thanks{\noindent\textsuperscript{$\ddagger$}Hideki Ochiai is with Yokohama National University, Yokohama, Japan.}}\textsuperscript{$\ddagger$} 
{\thanks{This work was supported in part by the Natural Sciences and Engineering Research Council of Canada and Cisco.}}
} 
\IEEEaftertitletext{\vspace{-7mm}}
 \maketitle 
 		\bstctlcite{IEEEexample:BSTcontrol}%
 
\begin{abstract}
In this paper, the downlink of a multi-cell massive MIMO system is considered where the channel state information (CSI) is estimated via pilot symbols that are orthogonal in a cell but re-used in other cells. Re-using the pilots, however, contaminates the CSI estimate at each base station (BS) by the channel of the users sharing the same pilot in other cells. The resulting inter-cell interference does not vanish even when the number of  BS antennas $M$ is large, \ied $M\rightarrow\infty$, and thus the rates achieved by treating interference as noise (TIN) saturate even if $M\rightarrow\infty$. In this paper, interference aware decoding schemes based on simultaneous unique decoding (SD) and simultaneous non-unique decoding (SND) of the full interference or a part of the interference (PD) are studied with two different linear precoding techniques: maximum ratio transmission (MRT) and zero forcing (ZF). The resulting rates are shown to grow unbounded as $M\rightarrow\infty$. In addition, the rates achievable via SD/SND/PD for finite $M$ are derived using a worst-case uncorrelated noise technique, which are shown to scale as $\mathcal{O}(\log M)$. To compare the performance of different schemes, the maximum symmetric rate problem is studied, where it is confirmed that with large, yet practical, values of $M$, SND strictly outperforms TIN, and also that PD strictly outperforms SND. 
\end{abstract} 
 \IEEEpeerreviewmaketitle
\vspace{-1mm}\section{Introduction} \vspace{-1mm}
Massive MIMO systems with base stations (BSs) simultaneously serving several users via a large number of antennas, are considered to be an integral part of 5G standards \cite{bjornson2019massive}. Moreover, the benefits of massive MIMO systems reported in \cite{bjornson2019massive}, \egd high spectral efficiency, increased capacity due to aggressive spatial multiplexing and high energy efficiency can all be achieved with simple linear precoding techniques.

The use of time-division duplex (TDD) mode has been suggested as the ``canonical form'' of massive MIMO \cite{bjornson2019massive}, where downlink and uplink channels are assumed to be reciprocal. When using TDD mode, channel estimates are obtained at BSs using uplink training pilots, which are then employed for precoding/decoding. However, as the length of the channel coherence time is finite, the sequences of orthogonal pilots are limited in number. Hence, in a multi-cell system, one way to address this limitation is to re-use a set of orthogonal pilots across all cells. Thus, the channel estimate of a user in one cell is contaminated by the channels of users in other cells that use the same pilot sequence as the former \cite{motlagh2019performance}, which is referred to as pilot contamination (PC).

Interestingly, when the channels of users to BSs in a massive MIMO system are assumed to be independent Rayleigh fading, as the number of BS antennas $M$ grows, the effects of multi-user interference, additive noise, and small-scale fading all vanish owing to channel hardening and asymptotic favorable propagation \cite{marzetta2010noncooperative}. However, inter-cell interference due to PC does not vanish \cite{marzetta2010noncooperative}, which constitutes a bottleneck in multi-cell massive MIMO systems. Extensive works have been carried out in the literature, such as those in \cite{yin2016robust, adhikary2017uplink, bjornson2018massive}, that have proposed solutions to tackle the PC problem, but under some constraints that may be difficult to realize in practice. Specifically, these include among others, the assumptions of (i) channel coherence time approaching infinity \cite{yin2016robust}, (ii) the presence of a central unit and cumbersome backhaul overhead to share all data symbols for centralized processing \cite{adhikary2017uplink}, (iii) channels of pilot-sharing users in different cells with asymptotically linearly independent covariance matrices \cite{bjornson2018massive}.

Most studies in the literature treat the PC interference as noise (TIN) and the resulting rates, which are suboptimal in many cases \cite{bandemer2011interference}, saturate and do not grow with $M$. In this paper, we address the PC interference problem in the downlink of a multi-cell massive MIMO system arising from re-using the same set of orthogonal pilots across cells. In particular, we depart from the viewpoint of treating the PC interference as noise and propose full/partial interference decoding schemes that attain unbounded rates for large $M$.  

The major contributions of this paper are summarized below. First, for two well-known linear precoding techniques at the BS, maximum ratio transmission (MRT) and  zero forcing (ZF), we derive achievable rates for schemes that decode the PC interference either fully such as the simultaneous unique/non-unique decoding (SD/SND), or partially (PD). Specifically, the PC interference is fully decoded uniquely in SD, and non-uniquely in SND. We similarly study achievable rates for a PD scheme, where messages at each BS are encoded into two independent layers; then only one layer of PC interference is decoded non-uniquely while the other layer is treated as noise. The per-user achievable rates for SD/SND/PD, derived using a worst-case uncorrelated noise technique, are shown to scale as $\mathcal{O} (\log M)$, thus eliminating the rate saturation problem. Focusing on the 2-cell case, we study the problem of maximum symmetric rate, \ied  maximizing the worst user's rate, for TIN/SD/SND/PD schemes, and observe that the full interference decoding scheme of SND outperforms TIN, while the partial decoding scheme of PD outperforms even SND.

Note that in \cite{motlagh2019performance}, full PC interference decoding schemes were proposed for the uplink of a multi-cell massive MIMO system using maximum ratio combining (MRC). In contrast, the current paper focuses on the downlink setting with both MRT and ZF and proposes not only full (SD/ SND) but also partial interference decoding (PD) schemes based on rate splitting. In particular, we find that the use of ZF which better mitigates multi-user interference in conjunction with either SND or PD results in notably larger rates such that the number of BS antennas required for interference decoding schemes to outperform TIN now reduces by more than a factor of $\approx 100$ (i.e., within the practical range), which is a significant improvement over the results of \cite{motlagh2019performance}.

\vspace*{-1mm}
\section{Preliminaries}\label{sec:2} \vspace*{0mm}
\subsection{System Model}\vspace*{0mm}
 A TDD multi-cell system with $L$ cells is considered, where each cell comprises a BS that serves $K$ single antenna users with $M$ transmitting antennas ($M \gg K$). The channel between the BS in cell $j$ ($\mathsf{BS}_j$) and users in cell $l$ is denoted as $\pmb{G}_{jl}=\pmb{H}_{jl}\pmb{D}_{jl}^{1/2},
$ where $\pmb{H}_{jl}=[\pmb{h}_{j1l},\pmb{h}_{j2l}, ..., \pmb{h}_{jKl}] \in \mathbb{C}^{M\times K}$ is the channel matrix corresponding to the channel vectors $\pmb{h}_{jkl} \in \mathbb{C}^{M \times 1}$ of small-scale fading coefficients between $\mathsf{BS}_j$ and the $k^{th}$ user in cell $l$ ($\mathsf{U}_{kl}$), and $\pmb{D}_{jl} = \textrm{diag}(\beta_{j1l}, \beta_{j2l}, ..., \beta_{jKl} )$ is the matrix containing the large-scale fading coefficients. Assuming a block fading model, the large-scale fading coefficient $\beta_{jkl}$, which models path loss and shadowing between $\mathsf{BS}_j$ and $\mathsf{U}_{kl}$, is taken to be a constant that is known \emph{a priori} at $\mathsf{BS}_j$, remains unchanged over multiple coherence time-intervals, and is independent of the antenna index of $\mathsf{BS}_j$. The small-scale fading coefficients $\pmb{h}_{jkl} \sim \mathcal{CN} (\pmb{0}, \, \pmb{I}_M)$ are also independent across coherence intervals but remain constant in a given coherence interval, with $\pmb{I}_M$ denoting the identity matrix of order $M$. The channel between $\mathsf{BS}_j$ and $\mathsf{U}_{kl}$ can also be expressed as $\pmb{g}_{jkl}= \sqrt{\beta_{jkl}} \pmb{h}_{jkl}.$ Finally, the downlink and uplink channels are assumed reciprocal.

\emph{Downlink Data Transmission:} The baseband signal received at $\mathsf{U}_{il}$ is given by \setlength{\abovedisplayskip}{-2pt}\setlength{\belowdisplayskip}{1pt}\vspace{2mm}
\begin{equation}\label{eq:3}
y_{il} = \sum\nolimits_{j=1}^L \sqrt{\rho_{\rm d}} \pmb{g}_{jil}^{\dag} \pmb{x}_j + w_{il} ,\vspace{2mm}
\end{equation}
where $\pmb{x}_j = \left[ x_j[1], x_j[2], ..., x_j[M] \right]^T $ is the signal transmitted  by $\mathsf{BS}_j$, $w_{il} \sim \mathcal{CN} (0,  1)$ is the AWGN at $\mathsf{U}_{il}$, $(.)^\dagger$ denotes conjugate transpose, and $\rho_{\rm d}$ can be interpreted as the \emph{downlink} transmission SNR per user of the BS. Defining $\pmb{u}_{jk} \in \mathbb{C}^{M \times 1}$ to be the precoding vector of $\mathsf{BS}_j$, we have $\textstyle \pmb{x}_j = {1}/{\sqrt{\lambda_j}}\sum_{k=1}^K \pmb{u}_{jk} s_j[k] = {\pmb{U}_j \pmb{s}_j}/{\sqrt{\lambda_j}}$ where $\pmb{s}_j = \left[ s_j[1], s_j[2], ..., s_j[K] \right]^T $ is the vector of data symbols intended for the users in cell $j$, $\pmb{U}_j = \left[ \pmb{u}_{j1}, \pmb{u}_{j2}, ..., \pmb{u}_{jK} \right] \in \mathbb{C}^{M \times K}$, and $\lambda_j$ is a normalization constant such that the downlink SNR per user for $\mathsf{BS}_j$ is $\rho_{\rm d}$, \ied ${\mathbb{E}[\rho_{\rm d} \pmb{x}_j^{\dag} \pmb{x}_j ]}/{ K} = \rho_{\rm d}$.

\emph{CSI Estimation at BS}: Similar to \cite{adhikary2017uplink}, a set of orthogonal pilot sequences is re-used in all cells, and thus the channel estimate will be contaminated with the PC interference from users in other cells that are using the same pilot. During the uplink training phase, $\mathsf{U}_{kj}$ transmits a pilot to $\mathsf{BS}_j$, $k=1,\ldots, K, j = 1,\ldots,L$, and then  $\mathsf{BS}_j$ estimates the local channels $\pmb{G}_{jj}$, given by $\pmb{\hat{G}}_{jj}$. 

The minimum mean squared error (MMSE) estimate $\pmb{\hat{g}}_{jkj}$ of $\pmb{g}_{jkj}$ can be shown to be \cite{adhikary2017uplink}   
\setlength{\abovedisplayskip}{1pt}\setlength{\belowdisplayskip}{1pt}\vspace{2mm}
\begin{align}\label{eq:6}
\pmb{\hat{g}}_{jkj} = \alpha_{jkj} \left( \sum\nolimits_{l=1}^L \sqrt{\rho_{\textrm{p}}} \pmb{g}_{jkl} + \pmb{\tilde{z}}_{jk} \right),
\end{align} 
where $\alpha_{jkj} := {\sqrt{\rho_{\textrm{p}}} \beta_{jkj}}/({1 + \rho_{\textrm{p}}\sum_{l=1}^L \beta_{jkl}})$,  $\pmb{\tilde{z}}_{jk} \sim \mathcal{CN} (\pmb{0}, \; \pmb{I}_M)$, and $\rho_{\textrm{p}}$ is the pilot SNR. Also note that the channel vector $\pmb{g}_{jkj}$ can be decomposed as $\pmb{g}_{jkj} = \pmb{\hat{g}}_{jkj} + \pmb{\epsilon}_{jkj}$, where the estimation error $\pmb{\epsilon}_{jkj} \sim \mathcal{CN} \left( \pmb{0}, \; \beta_{jkj} \left( 1 - \sqrt{\rho_{\textrm{p}}} \alpha_{jkj} \right) \pmb{I}_M \right) $ is independent of the MMSE estimate $\pmb{\hat{g}}_{jkj} \sim \mathcal{CN} \left( \pmb{0}, \; \sqrt{\rho_{\textrm{p}}} \beta_{jkj} \alpha_{jkj} \pmb{I}_M \right)$ \cite{adhikary2017uplink}. 

\emph{Treating Interference as Noise (TIN)}:
For MRT at $\mathsf{BS}_j$, the precoding vector is $\pmb{u}_{jk}=\pmb{\hat{g}}_{jkj}$, and the normalization factor $\lambda_j^{\rm mrt}$, as found from the condition ${\mathbb{E}[\rho_{\rm d} \pmb{x}_j^{{\dag}} \pmb{x}_j ]}/{ K} \!=\! \rho_{\rm d}$, is $\lambda_j^{\rm mrt} \!=\! \frac{M}{K} \sum_{k=1}^K\!\sqrt{\rho_{\textrm{p}}} \beta_{jkj} \alpha_{jkj}$.
Thus, rewriting \eqref{eq:3} we have \vspace{2mm}
\begin{equation}\label{eq:11}
y_{il}^{\rm mrt} = \sum\nolimits_{j=1}^L \sqrt{{\rho_{\rm d}}/{\lambda_j^{\rm mrt}}} \sum\nolimits_{k=1}^K \pmb{g}_{jil}^{\dag} \pmb{\hat{g}}_{jkj} s_j [k] + w_{il}. \vspace{2mm}
\end{equation}
Using \eqref{eq:6} and applying the strong law of large numbers to \eqref{eq:11}, we obtain\vspace{2mm}
\begin{equation}\label{eq:12}
\footnotesize
\footnotesize \dfrac{y_{il}^{\textrm{\rm mrt}}}{\sqrt{M}} \stackrel{\textrm{a.s.}}{\rightarrow}  \left( \frac{\sqrt{K \rho_{\textrm{d}}\rho_{\textrm{p}}} \beta_{lil} \alpha_{lil} s_l [i] }{\sqrt{\sum_{k=1}^K \sqrt{\rho_{\rm p}} \beta_{lkl} \alpha_{lkl}  }}  + \hspace{-2mm}\sum_{j=1, j \neq l}^L \frac{ \sqrt{K \rho_{\textrm{d}}\rho_{\textrm{p}}} \beta_{jil} \alpha_{jij} s_j [i] }{\sqrt{\sum_{k=1}^K \sqrt{\rho_{\rm p}} \beta_{jkj} \alpha_{jkj} }}  \right),\vspace{2mm}
\end{equation} 
as $M \rightarrow \infty$, \ied a channel hardening effect is observed.
 
Assuming TIN in the downlink, $\mathsf{U}_{il}$ decodes only the desired signal $s_l [i]$ while treating the interfering signals $s_{j} [i], \; j \neq l$ as noise. Hence, defining $R_{il}$ as the downlink rate associated with decoding $s_l [i]$, any rate tuple $\left( R_{i1}, ..., R_{iL} \right)$ is achievable if the following conditions hold\vspace{2mm}
\begin{equation}\label{eq:12:3} 
R_{il} \leq I \left( y_{il}^{\rm mrt}  ; s_l [i] \right),   \vspace{2mm} 
\end{equation}
for $l=1,\ldots, L, \; i=1,\ldots, K$, where $I(X;Y)$ is the mutual information between $X$ and $Y$ \cite{el2011network}. 
Using \eqref{eq:12}, it is verified that the right-hand side (r.h.s) in \eqref{eq:12:3} converges to a constant independent of $M$.
A similar phenomenon is observed when ZF precoding is used at the BS. In particular, considering ZF precoding matrix at $\mathsf{BS}_j$ given by $\pmb{V}_{jj} = \left[ \pmb{v}_{j1j}, ..., \pmb{v}_{jKj} \right] = \pmb{\hat{G}}_{jj} \left( \pmb{\hat{G}}_{jj}^{\dagger} \pmb{\hat{G}}_{jj} \right)^{-1}$, it can be verified that $\pmb{V}_{jj}^{\dagger} \pmb{\hat{G}}_{jj} = \pmb{I}_K$, and thereby $\pmb{v}_{jkj}^{\dagger} \pmb{\hat{g}}_{jmj} = \delta_{mk}$, where $\delta_{mk}$ is the Kronecker delta function. The normalization factor $\lambda_j^{\rm zf}$ is also derived to be $\lambda_j^{\rm zf} = \frac{1}{K (M-K)} \sum_{k=1}^K 1/ (\sqrt{\rho_{\rm p}} \beta_{jkj} \alpha_{jkj} ),$ from which an analysis similar to that of TIN in MRT results in the saturation of the rate obtained by ZF, i.e., $\!I \!\left( y_{il}^{\textrm{zf}} ; s_l [i]  \right)$ converges to a constant almost surely, as $M \rightarrow \infty$. 
Hence, the benefits of increasing $M$ saturate since the TIN rates obtained by MRT/ZF converge to a constant as $M$ becomes large. 
 \section{Decoding the PC Interference in Downlink} \vspace{0mm}
We propose schemes that treat the PC interference terms as individual users, and thus decode them fully or partially, as opposed to treating them as noise. This results in new achievable rates that grow unbounded with $M$.

\vspace{0mm}\subsection{Simultaneous Unique Decoding (SD)}\vspace{0mm}
In SD, the PC interference terms are decoded uniquely, details of which for MRT and ZF are given below.

\emph{MRT:}  Note that in \eqref{eq:12}, which expresses the received signal after performing MRT and letting $M \rightarrow \infty$,  the first term is the desired signal and the other non-vanishing terms are the inter-cell PC interference. Treating PC interference terms as users to be uniquely decoded, \eqref{eq:12} becomes identical to the output of a noise-free multiple-access channel (MAC). More specifically, after performing MRT at the BSs and letting $M \rightarrow \infty$, a noise-free $L$-user MAC is obtained at the user's side in downlink: for this MAC, unbounded per-user rates can be obtained by uniquely jointly decoding all signals $\lbrace s_l[i] \rbrace_{l=1}^L $.

We provide an analysis that establishes the achievable rates for finite $M$. We add and subtract a term associated with the mean of the effective channel $\pmb{g}_{jil}^{\dag} \pmb{\hat{g}}_{jij}$ in \eqref{eq:11}, which results in the following  \par\nobreak \vspace*{0mm} \begingroup
\addtolength{\jot}{0mm} \small{
	\begin{align} 
	\nonumber y_{il}^{\rm mrt} &= \underbrace{ \sum\nolimits_{j=1}^{L} \sqrt{\dfrac{\rho_{\rm d}}{\lambda_j^{\rm mrt}}} \mathbb{E} \left[ \pmb{g}_{jil}^{\dag} \pmb{\hat{g}}_{jij} \right] s_j [i]}_{\textrm{Desired signals}} \\
	\nonumber &\hspace{5mm}+ \underbrace{ \sum\nolimits_{j=1}^L \sqrt{\dfrac{\rho_{\rm d}}{\lambda_j^{\rm mrt}}}\left( \pmb{g}_{jil}^{\dag} \pmb{\hat{g}}_{jij} - \mathbb{E} \left[ \pmb{g}_{jil}^{\dag} \pmb{\hat{g}}_{jij} \right] \right) s_j [i] }_{\textrm{Interference due to beamforming gain uncertainty}}   \\
	&\hspace{5mm} + \underbrace{ \sum\nolimits_{j=1}^L \sqrt{\dfrac{\rho_{\rm d}}{\lambda_j^{\rm mrt}}} \sum\nolimits_{k=1, k \neq i}^K \pmb{g}_{jil}^{\dag} \pmb{\hat{g}}_{jkj} s_j [k]}_{\textrm{Inter-cell interference}} + \underbrace{ w_{il}}_{\textrm{Noise}} \label{eq:18} \\
	&= \sum\nolimits_{j=1}^L  \theta_{jij}^{\textrm{mrt}} s_j [i] + w_{il}^{\prime}, \label{eq:19}
	\end{align}}%
\endgroup where $\theta_{jij}^{\textrm{mrt}} = \sqrt{{\rho_{\rm d}}/{\lambda_j^{\rm mrt}}} \mathbb{E} \left[ \pmb{g}_{jil}^{\dag} \pmb{\hat{g}}_{jij} \right] $ and $w_{il}^{\prime}$ is the effective noise, incorporating the last three terms in \eqref{eq:18}.

Consider the $L$-user MAC in \eqref{eq:19} associated with the estimate of $s_l [i]$ at $\mathsf{U}_{il}$. Using the usual definitions of  \cite{el2011network}, messages $m_{il} \in \left[ 1 : 2^{n R_{il}} \right]\!,  l=1, ..., L $ (distributed uniformly) are communicated over this MAC by encoding them into codewords $\pmb{s}_l[i](m_{il})$ of length $n$, where ${s}_l[i] \sim \mathcal{CN} (0, 1)$, i.i.d. Using the standard random coding analysis \cite{el2011network} and decoding all messages uniquely as in SD, decoding error probability can be shown to approach zero as $n \rightarrow \infty$,  if \setlength{\abovedisplayskip}{1pt}\setlength{\belowdisplayskip}{1pt}
\begin{align}\label{eq:16}
 \sum\nolimits_{l \in \Omega} R_{il} \leq I \left(  y_{il}^{\rm mrt} ; \; \pmb{s}_{\Omega} \; \Big\vert \; \pmb{s}_{\Omega^c} \right), \vspace{2mm}
\end{align}
for all $\Omega \subseteq S= \lbrace 1, 2,\ldots, L\rbrace$, and $\pmb{s}_{\Omega}$ is the vector with entries $ s_l[i],\; l \in \Omega$. Thus, the rate tuple $\left( R_{i1}, ..., R_{iL} \right)$ in \eqref{eq:16} is achievable. Finally, the network-wide achievable region is obtained by taking the intersection of the achievable regions over all receivers.

As opposed to an ordinary MAC, where noise is typically Gaussian and independent of users' signals, for the MAC in \eqref{eq:19} it is neither Gaussian nor independent of the users' signals, which makes it hard to characterize exact expressions for $I(.;.|.)$ terms in \eqref{eq:16}. However, the effective noise term in \eqref{eq:19} is uncorrelated from the desired signals. Thus, we are able to establish lower bounds for these mutual information terms based on a worst-case uncorrelated noise technique as follows. \vspace{-3mm}
\begin{theorem}\label{lemma:3}
	For $\pmb{s}_l = \left[ s_l[1], s_l[2], ..., s_l[K] \right]^T \sim \mathcal{CN} \left(\pmb{0}, \; \pmb{I}_K  \right)$, $l \in \lbrace 1, 2, ..., L \rbrace $, a set of lower bounds to \eqref{eq:16} that can be achieved with MRT for the MAC given in \eqref{eq:19}, associated with $\mathsf{U}_{il}$, is given as follows
	\begin{equation}\label{eq:21} 
	I \left(  y_{il}^{\rm mrt} ; \; \pmb{s}_{\Omega}  \Big\vert  \pmb{s}_{\Omega^c} \right) \!\geq \!C \left( { P_1^{\rm mrt} }/{\left(P_2^{\rm mrt} + P_3^{\rm mrt} + P_4^{\rm mrt}\right)} \right)\! ,\!\!
	\end{equation}  where
	\par\nobreak \vspace*{-2mm} \begingroup \addtolength{\jot}{-0.5mm} { \begin{align} 
		P_1^{\rm mrt} &=  M^2 \sum\nolimits_{j \in \Omega}    \rho_{\rm d} \rho_{\rm p}  \beta_{jil}^2  \alpha_{jij}^2 /\lambda_j^{\rm mrt}   \label{eq:22} \\
		P_2^{\rm mrt} &=  M \sum\nolimits_{j=1}^L   \rho_{\rm d} \sqrt{\rho_{\rm p}} \beta_{jij} \alpha_{jij}  \beta_{jil}/ \lambda_j^{\rm mrt}  \label{eq:22:1} \\
		P_3^{\rm mrt} &=  M \sum\limits_{j=1}^L   \sum\nolimits_{k=1, k \neq i}^K  \!  \rho_{\rm d}\sqrt{\rho_{\rm p}} \beta_{jkj} \alpha_{jkj}  \beta_{jil}  / \lambda_j^{\rm mrt},     \label{eq:22:2}  	\\
 	 P_4^{\rm mrt} &= 1. \label{eq:22:3} 
		\end{align}}%
 \endgroup
\end{theorem}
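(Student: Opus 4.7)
My plan is to view \eqref{eq:19} as a MAC with an effective noise $w_{il}'$ that is uncorrelated with (but not independent of) each desired symbol $s_j[i]$, and then apply a worst-case uncorrelated-noise bound whose ``signal power'' is $P_1^{\rm mrt}$ and whose ``noise power'' is $P_2^{\rm mrt}+P_3^{\rm mrt}+P_4^{\rm mrt}$. As a first step I would write $w_{il}'$ as the sum of the three components displayed in \eqref{eq:18} (beamforming uncertainty, inter-cell interference from $k\neq i$, AWGN) and verify that each is zero-mean and uncorrelated with every $s_j[i]$: the beamforming-uncertainty term factors as a product of an independent zero-mean channel-dependent random variable and $s_j[i]$, the $k\neq i$ interference depends on $s_j[k]$ which is independent of $s_j[i]$, and the AWGN is independent of everything. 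The same argument shows that the three components are pairwise uncorrelated, so their variances add.

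The core step is the worst-case noise inequality, which I would state and justify as follows: for a scalar channel $Y=\sum_{l=1}^L h_l X_l + Z$ with $X_l\sim\mathcal{CN}(0,P_l)$ i.i.d.\ and $Z$ zero-mean and uncorrelated with every $X_l$, one has
\begin{equation*}
I(Y;\pmb X_\Omega\mid \pmb X_{\Omega^c})\;\geq\;\log\!\left(1+\frac{\sum_{l\in\Omega}|h_l|^2 P_l}{\mathrm{Var}(Z)}\right).
\end{equation*}
I would prove this by conditioning on $\pmb X_{\Omega^c}=\pmb x$, subtracting the known term $\sum_{l\in\Omega^c}h_l x_l$, and applying the standard scalar worst-case noise argument to the conditional channel: take the LMMSE estimator of $\pmb X_\Omega$ from the conditional observation and use the maximum-entropy bound $h(\cdot)\leq\log(\pi e\cdot\text{MMSE})$ together with $h(\pmb X_\Omega)=\sum_{l\in\Omega}\log(\pi e P_l)$ (which uses the Gaussianity of the inputs but not of the noise). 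This yields a conditional bound with $V(\pmb x)=\mathrm{Var}(w_{il}'\mid \pmb s_{\Omega^c}=\pmb x)$ in the denominator. Since $V\mapsto\log(1+P/V)$ is convex in $V$, Jensen's inequality gives $\mathbb{E}[\log(1+P/V(\pmb x))]\geq\log(1+P/\mathbb{E}[V(\pmb x)])$, and the law of total variance together with $\mathbb{E}[w_{il}'\mid\pmb s_{\Omega^c}]=0$ gives $\mathbb{E}[V(\pmb x)]=\mathrm{Var}(w_{il}')$, which is what we want.

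Applying this lemma to \eqref{eq:19} reduces the theorem to computing the four constants. For $P_1^{\rm mrt}$: substitute \eqref{eq:6} into $\mathbb{E}[\pmb g_{jil}^\dagger\pmb{\hat g}_{jij}]$; the cross terms vanish by independence of $\pmb g_{jil''}$ across cells $l''$ and of the pilot noise, leaving $\alpha_{jij}\sqrt{\rho_{\rm p}}\,\mathbb{E}[\|\pmb g_{jil}\|^2]=M\sqrt{\rho_{\rm p}}\alpha_{jij}\beta_{jil}$; square, multiply by $\rho_{\rm d}/\lambda_j^{\rm mrt}$, and sum over $j\in\Omega$. For $P_2^{\rm mrt}$: compute $\mathbb{E}[|\pmb g_{jil}^\dagger \pmb{\hat g}_{jij}|^2]$ using $\mathbb{E}[\|\pmb h\|^4]=M^2+M$ and $\mathbb{E}[|\pmb h^\dagger\pmb h'|^2]=M$ for independent $\pmb h,\pmb h'\sim\mathcal{CN}(\pmb 0,\pmb I_M)$; subtract $|\mathbb{E}[\cdot]|^2$; use the identity $1+\rho_{\rm p}\sum_{l''}\beta_{jil''}=\sqrt{\rho_{\rm p}}\beta_{jij}/\alpha_{jij}$ to simplify. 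For $P_3^{\rm mrt}$: when $k\neq i$ the estimate $\pmb{\hat g}_{jkj}$ is independent of $\pmb g_{jil}$, so $\mathbb{E}[|\pmb g_{jil}^\dagger\pmb{\hat g}_{jkj}|^2]$ is just $\beta_{jil}\cdot\mathbb{E}[\|\pmb{\hat g}_{jkj}\|^2]$-type quantity, which again simplifies using the same pilot-SNR identity. For $P_4^{\rm mrt}$: $\mathrm{Var}(w_{il})=1$ by assumption.

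The main obstacle I foresee is the step of justifying the conditional worst-case bound when $w_{il}'$ depends on $\pmb s_{\Omega^c}$; the conditioning-plus-Jensen argument above is designed precisely to handle this, and the key technical point is that $\log(1+P/V)$ is convex in $V$ so Jensen goes in the correct (lower-bound) direction. The remaining work is routine but must be done carefully: verifying pairwise uncorrelatedness of the three noise components, and carrying out the fourth-moment Gaussian computations for $P_2^{\rm mrt}$ without algebraic errors.
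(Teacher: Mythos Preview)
Your approach is essentially the same as the paper's: invoke a worst-case uncorrelated-noise lower bound on the MAC mutual information in \eqref{eq:19}, then compute the signal power $P_1^{\rm mrt}$ and the three uncorrelated noise-variance components $P_2^{\rm mrt},P_3^{\rm mrt},P_4^{\rm mrt}$. Two minor differences are worth noting. First, the paper does not re-derive the worst-case lemma at all; it simply cites \cite[Lemma~1]{motlagh2019performance}, whereas you supply a self-contained justification via conditioning on $\pmb{s}_{\Omega^c}$, a scalar LMMSE/max-entropy step, and Jensen in $V$. Your Jensen/total-variance argument is correct; the only point to tighten is that the LMMSE step is cleanest when applied to the \emph{scalar} $S=\sum_{j\in\Omega}\theta_{jij}^{\rm mrt}s_j[i]$ (and then $I(y;\pmb{s}_\Omega)\geq I(y;S)$ since $S$ is a function of $\pmb{s}_\Omega$), rather than to the vector $\pmb{s}_\Omega$ directly as you wrote. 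Second, for the moment computations the paper takes the orthogonal route of decomposing $\pmb{g}_{jil}=\hat{\pmb g}_{jil}+\pmb{\epsilon}_{jil}$ and using $\hat{\pmb g}_{jil}=(\beta_{jil}/\beta_{jij})\hat{\pmb g}_{jij}$, which splits each variance into an ``estimate'' part and an ``error'' part that are independent; you instead expand $\hat{\pmb g}_{jij}$ via \eqref{eq:6} and compute fourth moments directly. Both routes give the same $P_1^{\rm mrt},\ldots,P_4^{\rm mrt}$, and your use of the identity $1+\rho_{\rm p}\sum_{l''}\beta_{jil''}=\sqrt{\rho_{\rm p}}\beta_{jij}/\alpha_{jij}$ is exactly what collapses the direct expansion to the paper's expressions.
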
    
\begin{proof}
	The proof is relegated to Appendix A.
\end{proof} \vspace{-2mm}
 The terms $P^{\rm mrt}_{(.)}$ in Theorem~\ref{lemma:3} can be interpreted as follows. $P_1^{\rm mrt}$ is the power of the desired signals associated with $\pmb{s}_{\Omega}$, whereas $P_2^{\rm mrt}$ is the power of the interference due to beamforming gain uncertainty, $P_3^{\rm mrt}$ is the power of the interference of other users, and $P_4^{\rm mrt}$ is the power of the noise. Moreover, using \eqref{eq:22}-\eqref{eq:22:3}, the r.h.s of \eqref{eq:21} is expressed as \par\nobreak \vspace{-0mm}  \addtolength{\jot}{-1mm} {\small
	\begin{align}
	\!\!\!\!C_{\rm LB}^{\rm mrt} \!= C \!\left(\! \dfrac{ M^2 \sum\nolimits_{j \in \Omega}    \rho_{\rm d} \rho_{\textrm{p}}  \beta_{jil}^2  \alpha_{jij}^2 /\lambda_j^{\rm mrt}   }{M \sum_{j=1}^L   \sum_{k=1}^K  \! \left( \rho_{\textrm{d}} \sqrt{\rho_{\textrm{p}}} \beta_{jkj} \alpha_{jkj}  \beta_{jil} \right)\!/\lambda_j^{\rm mrt} +1  } \!\right) \!\!. \!\!  \label{eq:lb:mrt}
	\end{align}}%
Since for fixed $K$ and large $M$, $\lambda_j^{\rm mrt} \propto M$, the  rates in \eqref{eq:lb:mrt}  grow as $\mathcal{O} (\log M)$. Hence, by uniquely jointly decoding signals $\lbrace s_l[i] \rbrace_{l=1}^L $, unbounded rates are obtained as $M \rightarrow \infty$.

\emph{ZF:} Similar to MRT, by performing ZF at the BS and letting $M \rightarrow \infty$, a noise-free $L$-user MAC is obtained which is similar to that in \eqref{eq:12} for MRT. We now provide an analysis for ZF with finite $M$, similar to \eqref{eq:18}-\eqref{eq:19} for MRT: \par\nobreak \vspace*{-2mm} \begingroup 
\addtolength{\jot}{-0mm}{\small 
	\begin{align}\label{eq:zf:10} 
	y_{il}^{\rm zf} &= \sum\nolimits_{j=1}^L \sqrt{\dfrac{\rho_{\rm d}}{\lambda_j^{\rm zf}}} \sum\nolimits_{k=1}^K \left( \pmb{\hat{g}}_{jil}^{\dagger} + \pmb{\epsilon}_{jil}^{\dagger} \right) \pmb{v}_{jkj} s_j [k] + w_{il} \\
	&= \sum\nolimits_{j=1}^L \sqrt{\dfrac{\rho_{\rm d}}{\lambda_j^{\rm zf}}} \left( \dfrac{\beta_{jil}}{\beta_{jij}} \right) \pmb{\hat{g}}_{jij}^{\dagger} \pmb{v}_{jij} s_j [i]  \nonumber \\ 
	&\hspace{5mm}+ \sum\nolimits_{j=1}^L \sqrt{\dfrac{\rho_{\rm d}}{\lambda_j^{\rm zf}}} \sum\nolimits_{k=1, k \neq i}^K \left( \dfrac{\beta_{jil}}{\beta_{jij}} \right) \pmb{\hat{g}}_{jij}^{\dagger} \pmb{v}_{jkj} s_j [k] \nonumber \nonumber \\
	&\hspace{5mm}+ \sum\nolimits_{j=1}^L \sqrt{\dfrac{\rho_{\rm d}}{\lambda_j^{\rm zf}}} \sum\nolimits_{k=1}^K \pmb{\epsilon}_{jil}^{\dagger} \pmb{v}_{jkj} s_j [k] + w_{il}   \\ 
\nonumber	&\stackrel{(a)}{=}\! \underbrace{\sum_{j=1}^L \!\sqrt{\dfrac{\rho_{\rm d}}{\lambda_j^{\rm zf}}} \left( \dfrac{\beta_{jil}}{\beta_{jij}} \right)\!  s_j [i]}_{\textrm{ Desired signals}}   + \! \underbrace{\sum_{j=1}^L \! \sqrt{\dfrac{\rho_{\rm d}}{\lambda_j^{\rm zf}}} \sum_{k=1}^K \pmb{\epsilon}_{jil}^{\dagger} \pmb{v}_{jkj} s_j [k]}_{\textrm{ Interference due to channel estimation error }}\! + \!\underbrace{w_{il}}_{\rm Noise} \\[-2mm]
	&= \sum\nolimits_{j=1}^L \theta_{jij}^{\rm zf} s_j [i] + w_{il}^{\prime \prime} , \label{eq:zf:10:2}
	\end{align} }%
\endgroup where $(a)$ follows from the fact that $\pmb{v}_{jkj}^{\dagger} \pmb{\hat{g}}_{jmj} = \delta_{mk}$, $w_{il}^{\prime \prime}$ incorporates the last two terms after the third equality, and $\theta_{jij}^{\rm zf} = \sqrt{{\rho_{\rm d}}/{\lambda_j^{\rm zf}}} ( \beta_{jil} / \beta_{jij} )$.

We now have an $L$-user MAC in \eqref{eq:zf:10:2} associated with the estimate of $s_l [i]$. Following the treatment of the MRT case, and using standard random coding arguments as in \cite{el2011network}, it can be shown that the SD scheme achieves the rate tuple $\left( R_{i1}, ..., R_{iL} \right)$ that satisfies the set of conditions, obtained by replacing $y_{il}^{\rm mrt}$ in \eqref{eq:16} with the equivalent received signal $y_{il}^{\rm zf}$. The network-wide achievable rates are then obtained by taking the intersection of the achievable regions over all receivers.

Similar to the MRT case, the noise term in \eqref{eq:zf:10:2}  is neither Gaussian nor independent of the users' signals, nevertheless, it is uncorrelated from the desired signals. Thus, similar to those for MRT in Theorem~\ref{lemma:3}, we establish the following lower bounds for ZF using a worst-case uncorrelated noise technique.\vspace{-1mm}
\begin{theorem}\label{lemma:3:zf}
	For $\pmb{s}_l = \left[ s_l[1], s_l[2], ..., s_l[K] \right]^T \sim \mathcal{CN} \left(\pmb{0}, \; \pmb{I}_K  \right)$,  $l \in \lbrace 1, 2, ..., L \rbrace$, a set of lower bounds that can be achieved with ZF for the MAC given in \eqref{eq:zf:10:2}, associated with $\mathsf{U}_{il}$, is given as follows
	\begin{equation}\label{eq:21:1} 
	I \left(  y_{il}^{\rm zf} ; \; \pmb{s}_{\Omega} \; \Big\vert \; \pmb{s}_{\Omega^c} \right) \geq C \left( { P_1^{\rm zf} }/{\left(P_2^{\rm zf} + P_3^{\rm zf} \right)} \right), 
	\end{equation} where
	\par\nobreak \vspace*{-2mm} \begingroup \addtolength{\jot}{-0mm} { \begin{align} 
		P_1^{\rm zf} &= \sum\nolimits_{j \in \Omega}   \left( {\rho_{\rm d} \beta_{jil}^2}\right)/ \left({\lambda_j^{\rm zf} \beta_{jij}^2} \right) \label{eq:zf:14:1:1}\\
		P_2^{\rm zf} &= \sum\nolimits_{j=1}^L   \dfrac{\rho_{\rm d}}{\lambda_j^{\rm zf}}   \sum\nolimits_{k=1}^K   \dfrac{\beta_{jil} - \sqrt{\rho_{\rm p}} \beta_{jil} \alpha_{jil}}{(M-K) \sqrt{\rho_{\rm p}} \beta_{jkj} \alpha_{jkj}}, \label{eq:zf:14} \\
		 P_3^{\rm zf} &= 1. \label{eq:zf:14:1:2}
		\end{align}}%
	\endgroup
\end{theorem}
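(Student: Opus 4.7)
The plan is to adapt, step by step, the worst-case uncorrelated noise argument that established Theorem~\ref{lemma:3} to the ZF expression \eqref{eq:zf:10:2}. The key structural simplification relative to MRT is that the effective channels $\theta_{jij}^{\rm zf}=\sqrt{\rho_{\rm d}/\lambda_j^{\rm zf}}\,(\beta_{jil}/\beta_{jij})$ are deterministic, so no beamforming gain uncertainty term appears and the effective noise $w_{il}''$ consists only of the channel-estimation-error interference plus the AWGN. First I would verify the crucial zero-correlation condition, namely that $w_{il}''$ is uncorrelated with every desired symbol $s_j[i]$. This follows because (i) by the MMSE orthogonality principle, the error $\pmb{\epsilon}_{jil}$ is independent of $\pmb{\hat{G}}_{jj}$ (noting that the MMSE estimate of $\pmb{g}_{jil}$ at $\mathsf{BS}_j$ is a known scalar multiple of $\pmb{\hat{g}}_{jij}$), hence independent of each ZF precoder $\pmb{v}_{jkj}$; (ii) $\pmb{\epsilon}_{jil}$ is zero mean and, together with $\pmb{\hat{G}}_{jj}$, is independent across $j$; and (iii) the symbols $\{s_j[k]\}$ are zero mean and mutually independent.

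Given this, I would invoke the same worst-case uncorrelated-noise result used in Appendix~A: conditioned on $\pmb{s}_{\Omega^c}$, the conditional mutual information is lower bounded by the mutual information of the Gaussian multiple-access channel with input powers $\mathbb{E}[|s_j[i]|^2]=1$, deterministic gains $\theta_{jij}^{\rm zf}$, and a circularly symmetric Gaussian noise whose variance equals $\mathrm{Var}(w_{il}'')$. This yields
\begin{equation*}
I\!\left( y_{il}^{\rm zf}; \pmb{s}_{\Omega}\, \big|\, \pmb{s}_{\Omega^c}\right) \geq C\!\left(\frac{\sum_{j\in\Omega} |\theta_{jij}^{\rm zf}|^2}{\mathrm{Var}(w_{il}'')}\right),
\end{equation*}
and substituting the closed form of $\theta_{jij}^{\rm zf}$ identifies the numerator with $P_1^{\rm zf}$ in \eqref{eq:zf:14:1:1}, while the AWGN contributes $P_3^{\rm zf}=1$.

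The main technical step remaining is the evaluation of $\mathrm{Var}(w_{il}'')-1$. Using the independence of $\pmb{\epsilon}_{jil}$ across $j$ and of $\{s_j[k]\}$ across $(j,k)$, all cross terms vanish and the variance collapses to $\sum_{j=1}^L (\rho_{\rm d}/\lambda_j^{\rm zf})\sum_{k=1}^K \mathbb{E}\!\left[|\pmb{\epsilon}_{jil}^\dagger \pmb{v}_{jkj}|^2\right]$. Conditioning on $\pmb{v}_{jkj}$ and using $\pmb{\epsilon}_{jil}\sim\mathcal{CN}(\pmb{0},\beta_{jil}(1-\sqrt{\rho_{\rm p}}\alpha_{jil})\pmb{I}_M)$ independently of the precoder reduces each term to $\beta_{jil}(1-\sqrt{\rho_{\rm p}}\alpha_{jil})\,\mathbb{E}[\|\pmb{v}_{jkj}\|^2]$. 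Since $\pmb{V}_{jj}^\dagger \pmb{\hat{G}}_{jj}=\pmb{I}_K$, $\|\pmb{v}_{jkj}\|^2$ equals the $k$-th diagonal entry of $(\pmb{\hat{G}}_{jj}^\dagger \pmb{\hat{G}}_{jj})^{-1}$. Writing $\pmb{\hat{G}}_{jj}=\pmb{\tilde{G}}_{jj}\pmb{\Sigma}_j^{1/2}$ with $\pmb{\tilde{G}}_{jj}$ having i.i.d.\ $\mathcal{CN}(0,1)$ entries and $\pmb{\Sigma}_j=\mathrm{diag}(\sqrt{\rho_{\rm p}}\beta_{jkj}\alpha_{jkj})_{k=1}^K$, the classical inverse-complex-Wishart identity $\mathbb{E}[(\pmb{\tilde{G}}_{jj}^\dagger \pmb{\tilde{G}}_{jj})^{-1}]=\pmb{I}_K/(M-K)$ gives $\mathbb{E}[\|\pmb{v}_{jkj}\|^2]=1/((M-K)\sqrt{\rho_{\rm p}}\beta_{jkj}\alpha_{jkj})$, which reassembles to exactly $P_2^{\rm zf}$ in \eqref{eq:zf:14}. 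The hardest part is this inverse-Wishart moment computation combined with the careful bookkeeping to confirm that every cross-term vanishes either by independence or by mere uncorrelatedness.
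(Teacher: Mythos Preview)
Your proposal is correct and follows precisely the route the paper intends: the paper omits the proof and merely states it is ``similar to that of Theorem~\ref{lemma:3},'' i.e., apply the worst-case uncorrelated-noise lemma to \eqref{eq:zf:10:2}, verify the effective noise $w_{il}''$ is uncorrelated with the desired symbols, and evaluate $P_1^{\rm zf}$ and $\mathrm{Var}(w_{il}'')=P_2^{\rm zf}+P_3^{\rm zf}$ exactly as you do. Your explicit inverse-complex-Wishart step $\mathbb{E}\big[\|\pmb{v}_{jkj}\|^2\big]=1/\big((M-K)\sqrt{\rho_{\rm p}}\beta_{jkj}\alpha_{jkj}\big)$ and the MMSE-orthogonality check that $\pmb{\epsilon}_{jil}\indep \pmb{\hat{G}}_{jj}$ are the right ingredients and are also implicitly used in the paper's derivation of $\lambda_j^{\rm zf}$.
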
    
The proof is similar to that of Theorem~\ref{lemma:3}, hence is omitted.

Similar to those for MRT, $P_1^{\rm zf}$ is the power of the desired signals associated with $\pmb{s}_{\Omega}$, while $P_2^{\rm zf}$ is the power of the interference due to the channel estimation error,  and $P_3^{\rm zf}$ is the noise power. Using \eqref{eq:zf:14:1:1}-\eqref{eq:zf:14:1:2}, the r.h.s of \eqref{eq:21:1} is given by  
 \par\nobreak \vspace{-1mm}  \addtolength{\jot}{-1mm} {\small
	\begin{align}
		C_{\rm LB}^{\rm zf} \!=  C \!\left( \frac{\sum\nolimits_{j \in \Omega} \left( {\rho_{\rm d} \beta_{jil}^2}\right)/ \left({\lambda_j^{\rm zf} \beta_{jij}^2} \right)   }{\sum\nolimits_{j=1}^L   \dfrac{\rho_{\rm d}}{\lambda_j^{\rm zf}}   \sum\nolimits_{k=1}^K   \dfrac{\beta_{jil} - \sqrt{\rho_{\rm p}} \beta_{jil} \alpha_{jil}}{(M-K) \sqrt{\rho_{\rm p}} \beta_{jkj} \alpha_{jkj}} +1} \right)\!\!. \!\!\!\label{eq:lb:zf} \vspace{1mm}
	\end{align}}%
Since $\lambda_j^{\rm zf} \!= \!\frac{1}{K (M-K)} \sum_{k=1}^K 1/ (\sqrt{\rho_{\rm p}} \beta_{jkj} \alpha_{jkj} )$, for fixed $K$ and large $M$, $\frac{1}{\lambda_j^{\rm zf}} \propto M$. Thus, $C_{\rm LB}^{\rm zf}$ grows as $\mathcal{O} (\log M)$. 
\vspace{-1mm}\subsection{Simultaneous Non-unique Decoding (SND)} \vspace{0mm}
It is important to note that $\mathsf{U}_{il}$ is only interested in correctly decoding $s_{l} [i]$. Thus, incorrectly decoding $s_j [i], \; j \neq l$, should not penalize the rates achievable at $\mathsf{U}_{il}$. Furthermore, the power of the received signal for the users located in distant cells is very small, and thus trying to decode signals of such users can reduce achievable rates considerably.
This motivates the need for \emph{non-unique} interference decoding (SND), which achieves higher rates than SD since the stringent condition of \emph{uniquely} decoding interference is now relaxed. 

In contrast to the SD scheme for which we focused on the MAC formed at $\mathsf{U}_{il}$, we now consider an $L$-user interference channel (IC) that consists of $L$ senders (\ied $\{\mathsf{BS}_l\}_{l=1}^L$) and $L$ receivers (\ied $\lbrace \mathsf{U}_{il} \rbrace_{l=1}^L$). In particular, $\mathsf{U}_{il}$ decodes only the intended message $s_l [i]$ uniquely, while the interference signals $s_j [i], \; j \neq l$ are decoded non-uniquely in the sense that incorrect decoding of the latter does not incur any penalty \cite[Chap.~6]{el2011network}. When point-to-point random codes are used, as in our setting, SND has been shown to achieve the capacity region of such $L$-user IC \cite{bandemer2015optimal}. In what follows, we only focus on the case of a 2-cell system for the sake of simplicity. From \cite{bandemer2015optimal}, the rate region for the 2-cell system achieved with SND at $\mathsf{U}_{i1}$ is given by \par\nobreak \vspace*{-3mm}  \addtolength{\jot}{1mm} { 
\begin{align}\label{snd:region} 
&R_{i1} \leq  I \left(  y_{i1}^{\textrm{mrt/zf}} ;  s_1 [i]  \big\vert \; s_2 [i] \right)  \\
\nonumber &R_{i1} \!  + \! \min \left\lbrace R_{i2} ,  I \left(  y_{i1}^{\textrm{mrt/zf}} ;   s_2 [i]   \big\vert   s_1 [i] \right) \right\rbrace  \!  \leq \!  I \left(  y_{i1}^{\textrm{mrt/zf}} ;   s_1 [i],   s_2 [i]   \right)\!.  
\end{align}}The $I(.;.|.)$ terms above can be readily lower bounded using Theorem~\ref{lemma:3} and Theorem~\ref{lemma:3:zf}, which scale as $\mathcal{O} (\log M)$ as in the SD case. Similarly, the rate region achieved at the other user, $\mathsf{U}_{i2}$, is obtained by replacing $y_{i1}^{\textrm{mrt/zf}}$ with $y_{i2}^{\textrm{mrt/zf}}$ and swapping appropriate indices in the above rate expressions. Hence, the network-wide rate region is given by the intersection of regions achieved at both users. Note that the SND region in \eqref{snd:region} can be viewed as the union of TIN and SD, i.e., TIN and SD regions are both strictly contained in SND, showing the advantages of non-uniquely decoding the interference. It should be pointed out that practical implementations of interference decoding in cellular networks using SND have been thoroughly studied in \cite{kim2015adaptive, kim2016interference}.  
\vspace{-1mm}\subsection{Partial Decoding (PD)} \vspace{0mm}
In SD and SND, interfering messages are decoded either uniquely or non-uniquely but in their \emph{entirety}. Nevertheless, it may be useful to decode only a \emph{part} of the message. Such flexible encoding/decoding is obtained by the celebrated Han-Kobayashi (HK) scheme \cite{hk}, which provides the best rates for the IC. While the HK scheme is complex in general, we characterize a PD scheme for our 2-cell setting based on the simplified rate splitting scheme of \cite{etw}. 

In SND, a message from $\mathsf{BS}_{l}$, $m_{il} \in \left[ 1 : 2^{n R_{il}} \right], l=1,2$, is encoded into a single codeword. In contrast, encoding in the PD scheme proceeds as follows:  message $m_{il}$ is first partitioned into two parts $m_{il}^{[1]}$ and $m_{il}^{[2]}$; part  $m_{il}^{[1]}$ is then encoded into  codeword $\pmb{s}_l^{[1]}[i](m_{il}^{[1]})$ of length $n$, while part  $m_{il}^{[2]}$ is encoded into another  codeword $\pmb{s}_l^{[2]}[i](m_{il}^{[1]}, m_{il}^{[2]})$ of length $n$; finally the latter codeword is superimposed (or layered) on the former to produce a single codeword for transmission $\pmb{s}_l[i](m_{il}^{[1]}, m_{il}^{[2]}) = \pmb{s}_l^{[1]}[i](m_{il}^{[1]}) + \pmb{s}_l^{[2]}[i](m_{il}^{[1]}, m_{il}^{[2]})$, $l = 1,2$. The total transmit power budget at $\mathsf{BS}_{l}$ is split into two fixed parts according to the power splitting variable $\mu_{il}  \in [0, 1]$: the fraction $\mu_{il}$ of the budget is allocated to the ``outer'' layer $\pmb{s}_l^{[2]}[i]$, while the fraction  $(1- \mu_{il})$ of the budget is allotted to the ``inner'' layer $\pmb{s}_l^{[1]}[i]$. Finally, $\pmb{s}_l^{[1]}[i]$ and $\pmb{s}_l^{[2]}[i]$ are chosen to be i.i.d., zero-mean circularly symmetric complex Gaussian, with powers determined by $\mu_{il}$.
 
Note that while full information $(m_{il}^{[1]},m_{il}^{[2]})$ is carried in $\pmb{s}_l[i]$, the inner layer $\pmb{s}_l^{[1]}[i]$ carries only coarse information $m_{il}^{[1]}$ which can be decoded without decoding $m_{il}^{[2]}$ in the outer layer. Hence, when the channel condition is poor, a receiver may only decode $m_{il}^{[1]}$, while for strong channels it may decode $(m_{il}^{[1]},m_{il}^{[2]})$ from both layers. Such adaptability helps PD to outperform SND, where messages are not layered and thus full interference is either non-uniquely decoded or treated as noise.
 
The decoding at $\mathsf{U}_{i1}$ proceeds as follows: it decodes both the inner and outer layers of the intended message $(m_{i1}^{[1]},m_{i1}^{[2]})$ uniquely, and only tries to non-uniquely decode the inner layer of the interfering message $m_{i2}^{[1]}$, while treating the outer layer $m_{i2}^{[2]}$ as noise. The other user, $\mathsf{U}_{i2}$, decodes similarly with the roles of the interfering and intended messages exchanged. The resulting rate region for the user pair $\mathsf{U}_{il}, l=1,2$, is given by $7$ inequalities found by changing the variables in \cite[Theorem~6.4]{el2011network} to those in our setting as follows: $X_1 \eqdef s_1[i], X_2 \eqdef s_2[i], U_1 \eqdef {s}_1^{[1]}[i], U_2 \eqdef {s}_2^{[1]}[i], Y_1 \eqdef  y_{i1}^{\textrm{mrt/zf}}, Y_2 \eqdef  y_{i2}^{\textrm{mrt/zf}}, Q \eqdef \emptyset$. 

Note that similar to those in the MRT/ZF cases for the SD scheme, the rate expressions involve $I(,;,|.)$ terms that cannot be computed exactly, hence we characterize lower bounds of these $I(,;,|.)$ terms following similar steps to those in Theorems~\ref{lemma:3} and \ref{lemma:3:zf}. Due to space limitations, we skip the expressions and detailed derivations, and instead, show that the achievable rates of PD outperform TIN/SD/SND, through simulation results presented in Section~V.

\vspace{-1mm}\section{Maximum Symmetric Rate Allocation} 
From \eqref{eq:19} and \eqref{eq:zf:10:2}, it is evident that users with relatively weak effective channel gain obtain smaller rates than those with a stronger channel. In order to ensure fairness among users, we study the maximum symmetric rate allocation problem for schemes TIN/SD/SND/PD, where the minimum achievable user-rate is maximized. The following analysis is for $\lbrace \mathsf{U}_{il} \rbrace_{l=1}^L $ across multiple cells that employ the same pilots. Since $i$ is arbitrary, the same analysis holds for other pilot-sharing user-sets.

When rate allocation is symmetric, all users are assigned the same rate. The maximum symmetric rate for a decoding scheme is given by $R_{\textrm{Sym}} = \max R$ such that the rate vector $\left( R, R, ..., R \right)$ belongs to the rate region achieved by that scheme at $\mathsf{U}_{il}$, for $l=1, ..., L$. Hence, the maximum symmetric rate vector $( R_{\textrm{Sym}}, R_{\textrm{Sym}},\ldots, R_{\textrm{Sym}})$ of a given scheme is found at the intersection of the diagonal $\left( R_{i1} = ... = R_{iL} \right)$ with the boundary of the achievable region obtained by that scheme.

\vspace{-1mm}\section{Simulation Results and Discussion} \vspace{-1mm}
To illustrate the performance of the different interference management schemes, TIN/SD/SND/PD, with symmetric spectral efficiency (SE in units of bits/sec/Hz) allocation, we simulate the downlink of a 2-cell massive MIMO system suffering from PC. In particular, we consider two hexagonal cells, as in Fig.~\ref{fig:two:cell}, with a radius of $r=400$ m, and $K=15$ users are uniformly distributed at random within the area of each cell, and at least $35$ m away from their BSs. To evaluate the performance, the average of maximum symmetric SEs is calculated over $150$ random realizations of user locations. The downlink transmit power of each BS is taken to be $40$ W (i.e., $46$ dBm), and to model large-scale fading coefficients, $\beta_{jkl}$, a path-loss model adopted from \cite{pathloss} is considered:\vspace{1mm}
\begin{align}\label{pathloss}
\left[ \beta_{jkl} \right]_{\rm dB} = &-13.54 -39.08 \log_{10} \left( d_{jkl}^{3D}  \right) \\
\nonumber &- 20 \log_{10} \left( f_c \right) + 0.6 \left(  h_{UT} - 1.5 \right), 
\end{align}
where $d_{jkl}^{3D}$ is the 3D distance (in meters) from $\mathsf{U}_{kl}$ to $\mathsf{BS}_j$, the carrier frequency is $f_c = 3.5$ GHz, $h_{UT} = 1.5$ m is the user height, and the BS height is $25$ m. Also, with a system bandwidth of $20$ MHz, the noise power is assumed to be $-101$ dBm. The SEs are calculated using the closed-form expressions of \eqref{eq:lb:mrt} and \eqref{eq:lb:zf} for MRT and ZF, respectively. 
\begin{figure}[t!]
\centering
\includegraphics[scale=0.39]{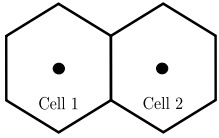}
\caption{A cellular system with two hexagonal cells, where BSs are located at the center of the cell and $K$ users are randomly positioned within the area of each cell. \label{fig:two:cell}\vspace{-7mm}}
\end{figure}
In Fig.~\ref{fig:1}, the achieved symmetric SEs of the different schemes TIN/SD/SND/PD are shown with MRT precoding, as a function of $M$. Note that the SEs for the PD scheme depend on the power splitting coefficients $\mu_{i1}, \mu_{i2} \in [0, 1]$, and hence the best SE for each value of $M$ is obtained by numerically optimizing $\mu_{i1}, \mu_{i2} \in [0, 1]$. First, notice that the SD scheme performs poorly compared to the other schemes, and unless $M$ becomes truly large (not shown in this figure) it can not outperform TIN. Moreover, for the range of $M$ shown, as $M$ increases, the SEs achieved by TIN, SND and PD improve, while the interference decoding schemes, SND and PD, outperform TIN for moderately large values of $M$ (i.e., $M \geq 128$). Moreover, since the PD scheme enables partial decoding of the PC interference with the remaining part treated as noise, this flexibility produces better rates than SND, such that PD outperforms both TIN and SND even for a small number of antennas $M$ (i.e., $M \geq 32$). More specifically, when using MRT, SND provides performance gain of about $8\%$ and $10\%$ over TIN for $M=128$ and $M=256$, respectively, while this gain improves to more than $17\%$ when $M=1024$. Furthermore, the gain offered by PD over TIN is about $29\%$ and $36\%$ for $M=128$ and $M=256$, respectively, and reaches approximately $72\%$ when $M=1024$. While not shown in the figure, as $M$ continues to grow and becomes truly large, the performance of SND and PD eventually converge, whereas the performance of TIN saturates to a constant value independent of $M$.\vspace{-0mm}
 
The performance of TIN/SD/SND/PD with ZF is demonstrated in Figs.~\ref{fig:2} and \ref{fig:3}. Fig.~\ref{fig:2} shows the achieved symmetric SEs with ZF for a range of moderately large values of $M$, while Fig.~\ref{fig:3} shows the same for a range of truly large $M$. While the latter covers a range of $M$ that is beyond practical, the results of Fig.~\ref{fig:3} can be used to confirm asymptotic performance limits as $M \rightarrow \infty$. One can notice in Fig.~\ref{fig:2} that all schemes achieve better SEs compared to MRT, which is due to the fact that ZF better mitigates multi-user interference. As such, compared to MRT, for the same number of antennas $M$, both SND and PD provide larger gains over TIN. More precisely, SND provides performance gains of $11\%$ and $17\%$ over TIN for $M=128$ and $M=256$, respectively, and this gain grows to more than $34\%$ when $M=1024$. On the other hand, due to the additional flexibility of partial interference decoding provided by PD, the gains offered by this scheme over TIN are about $96\%$ and $108\%$ for $M=128$ and $M=256$, respectively, while this gain improves to approximately $133\%$ when $M=1024$. Hence, these results confirm that the use of ZF precoding in conjunction with either SND or PD, can potentially be considered as an attractive solution to combat the PC interference in multi-cell massive MIMO systems. Particularly in the case of PD, the achieved SEs are increased by more than a factor of $2$ for a practical range of antennas $M$, which is a significant improvement over TIN. Similar to the MRT case, from Fig.~\ref{fig:2}, it is observed that for the practical range of $M$, SD is outperformed by the other schemes. The asymptotic limits presented in Fig.~\ref{fig:3} show that as $M$ continues to grow without limit, the SE obtained by TIN saturates, whereas the achieved SEs of both PD and SND monotonically increase with $M$ and will eventually converge. This is due to the fact that, as $M$ grows, and thus the strength of the PC interference increases, the power of the inner layer can be increased. This improvement is continued until the entire part of an interference signal can be decoded non-uniquely under PD (i.e., $\mu_{i1}, \mu_{i2} \rightarrow 0$), thus achieving a performance identical to that of SND. Lastly, Fig.~\ref{fig:3} shows that SD starts to outperform TIN at approximately $M \approx 4 \times 10^5$.\vspace{-0mm}      

Note that the SND scheme can be readily extended to more than two cells in downlink in a similar manner discussed in \cite{motlagh2019performance} for uplink. With small modifications, the message splitting technique for PD can also be extended to more than two cells. Simulation results (not shown here due to the space limitations) have revealed that doing so again results in significant gains over TIN and SND for the case of $L > 2$.
\begin{figure*}[t] 
	\captionsetup[subfloat]{captionskip=1mm}
	\centering    
	\subfloat[MRT]{\label{fig:1}\includegraphics[scale=0.45]{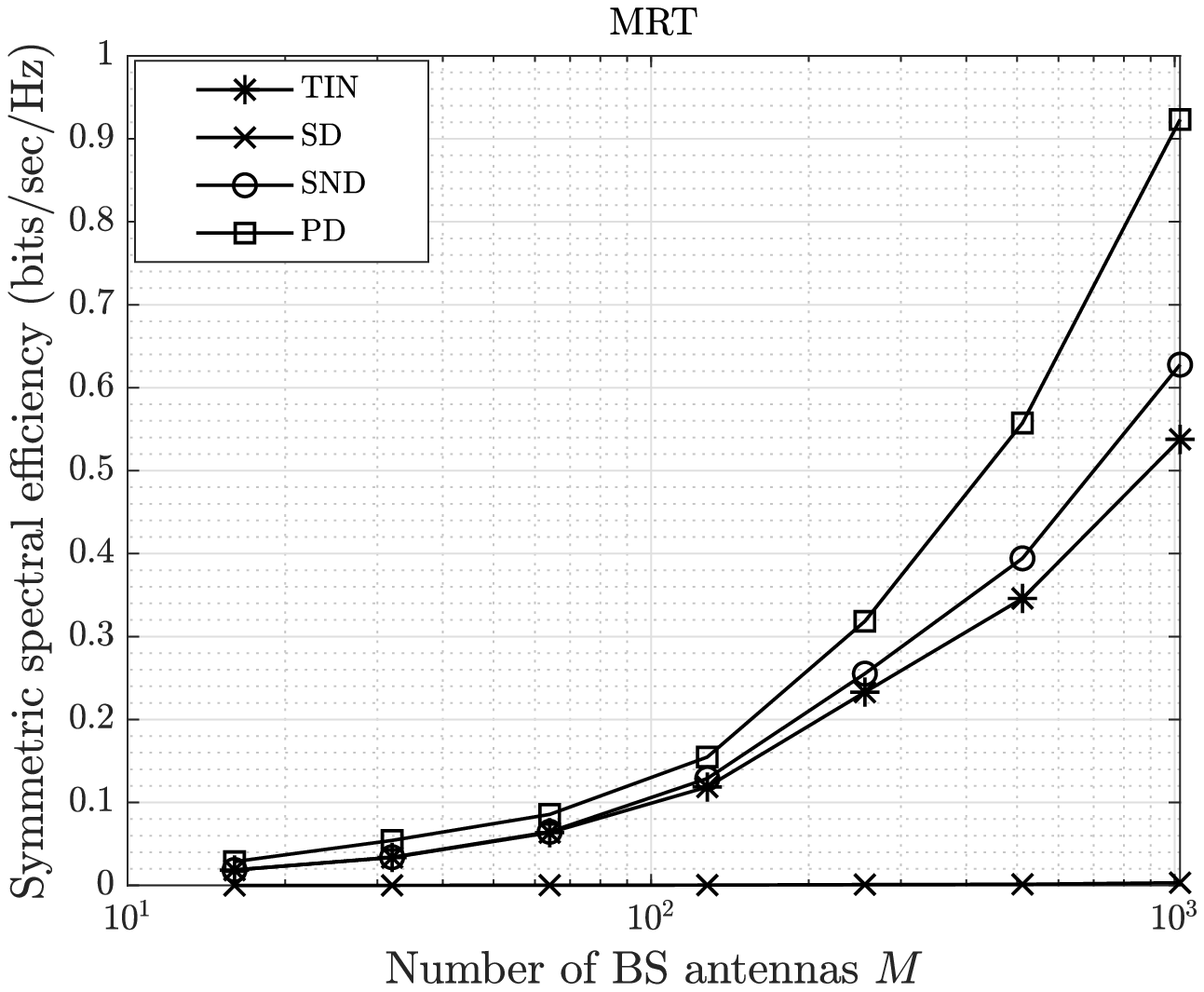}} \hspace*{-5mm}%
	\subfloat[ZF]{\label{fig:2}\includegraphics[scale=0.45]{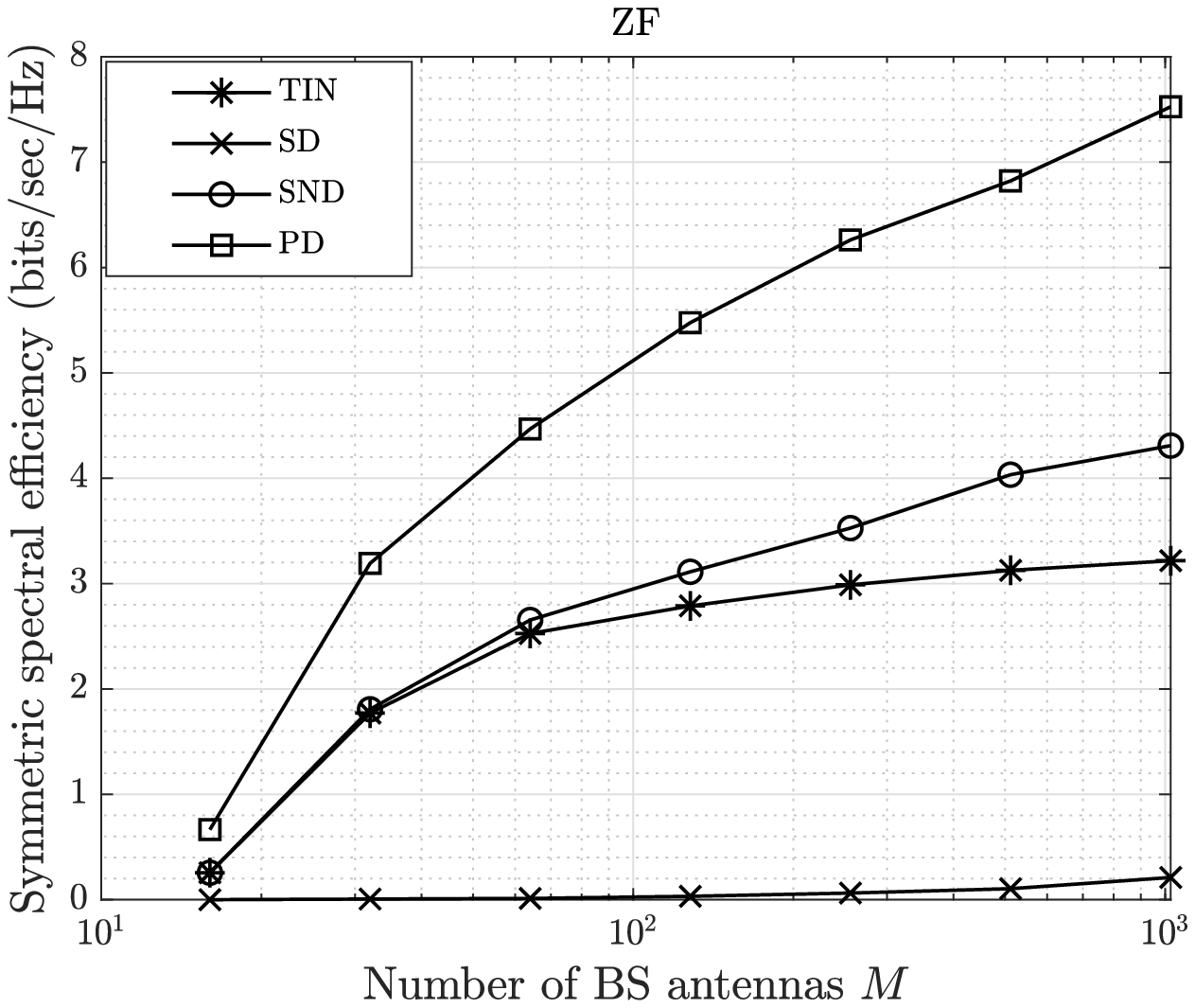}} \hspace*{-5mm}%
	\subfloat[ZF, asymptotic limits]{\label{fig:3}\includegraphics[scale=0.45]{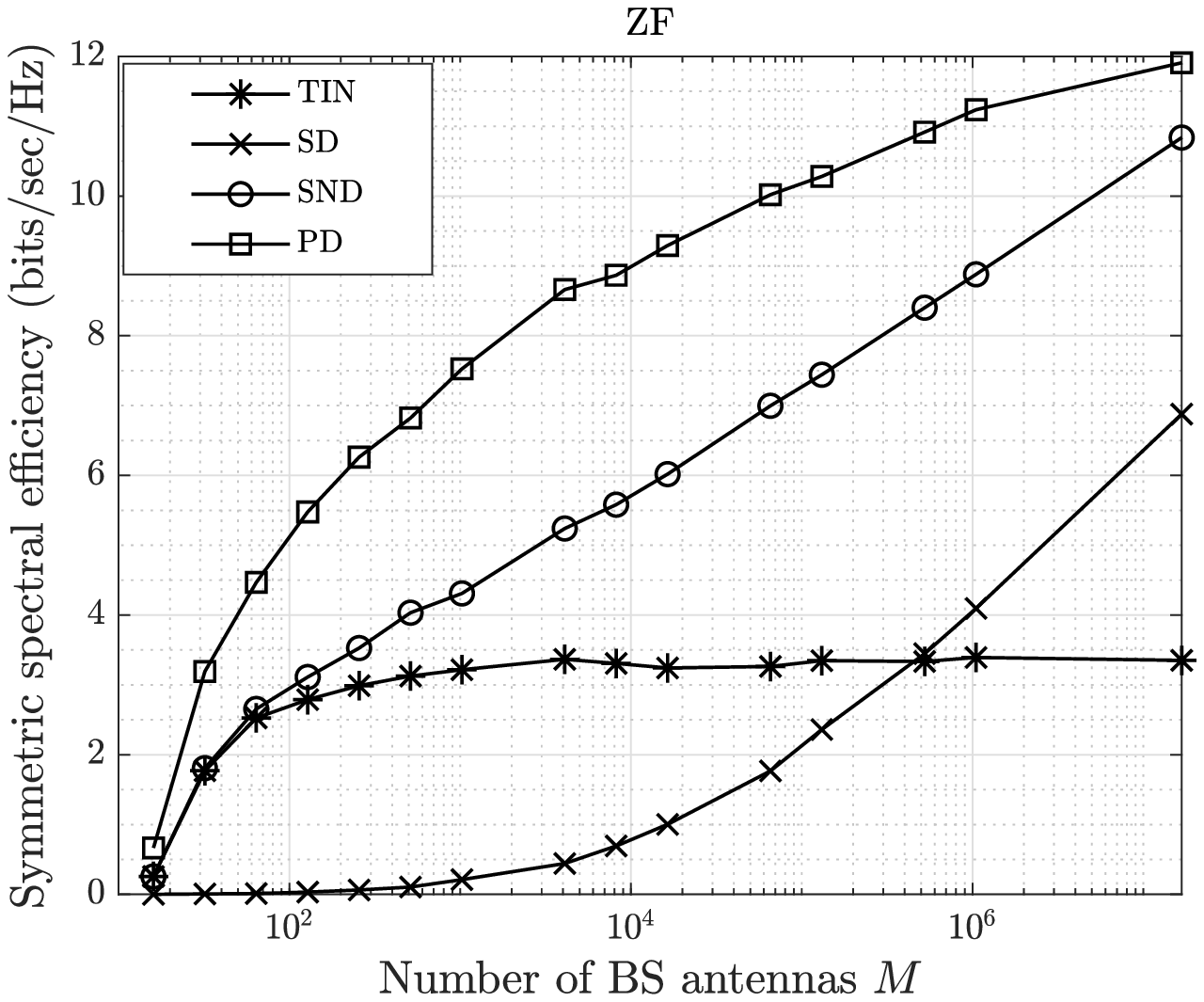}}\vspace*{-3mm}%
	\caption{\small (a) Symmetric SEs achieved with MRT under TIN/SD/SND/PD, (b) Symmetric SEs achieved with ZF under TIN/SD/SND/PD, for a range of moderately large $M$, (c) Asymptotic limits of symmetric SEs with ZF under TIN/SD/SND/PD for truly large values of $M$.}\vspace*{-3mm}
\end{figure*}
\vspace{-1mm}\section{Conclusion} \vspace{-0mm}
In this paper, for the downlink of a multi-cell massive MIMO system, schemes were proposed that decode the interference caused by PC, as opposed to treating it as noise. It was shown that for two linear precoding techniques, MRT and ZF, by decoding the PC interference, the achievable rates grow unbounded as $M \rightarrow \infty$. Furthermore, using a worst-case uncorrelated noise technique, new closed-form achievable rates for the interference decoding schemes with MRT and ZF were obtained which are shown to scale as $\mathcal{O} (\log M)$. Simulation results for the maximum symmetric SEs in a 2-cell system revealed that non-uniquely decoding the PC interference provides a better performance compared to TIN. Moreover, a partial interference decoding scheme based on rate splitting was proposed, outperforming \textit{all other schemes} for a practical range of $M$.  

\vspace*{-0mm}
\appendices
\vspace{-1mm}
\section{}
\vspace{-1mm}We employ the worst-case uncorrelated noise technique of \cite[Lemma~1]{motlagh2019performance}, \ied an achievable lower bound on the capacity of a MAC with uncorrelated additive non-Gaussian noise, $w^{\prime}_{il}$, is obtained by replacing the noise term with an independent zero-mean Gaussian noise having the same variance. 

We start by computing $P_1^{\rm mrt}$. Note that
\par\nobreak \vspace*{-2mm} \begingroup \addtolength{\jot}{-0mm} {\small \begin{align}\label{appB:1}
P_1^{\rm mrt} &= \sum\nolimits_{j \in \Omega}^L   {\rho_{\rm d}}/{\lambda_j^{\rm mrt}}   \left\vert  \mathbb{E} \left[ \hat{\pmb{g}}_{jil}^{\dagger} \hat{\pmb{g}}_{jij} \right] + \mathbb{E} \left[ \pmb{\epsilon}_{jil}^{\dagger} \hat{\pmb{g}}_{jij} \right] \right\vert^2  \notag \\
&\stackrel{\rm (a)}{=} \sum\nolimits_{j \in \Omega}^L    {\rho_{\rm d}}/{\lambda_j^{\rm mrt}}    \left\vert \mathbb{E} \left[ \hat{\pmb{g}}_{jil}^{\dagger} \hat{\pmb{g}}_{jij} \right]  \right\vert^2 \notag \\
&\stackrel{\rm (b)}{=} \sum\nolimits_{j \in \Omega}^L  \left( M \sqrt{\rho_{\rm p}} \beta_{jij} \alpha_{jij}  \right)^2   {\rho_{\rm d} \beta_{jil}^2}/ {\lambda_j^{\rm mrt} \beta_{jij}^2}       \notag \\
&= M^2 \sum\nolimits_{j \in \Omega}  \rho_{\rm d} \rho_{\textrm{p}}  \beta_{jil}^2  \alpha_{jij}^2 / \lambda_j^{\rm mrt}   ,  
\end{align}}%
where (a) is due to the fact that $\pmb{\epsilon}_{jil}$ and $\hat{\pmb{g}}_{jij}$ are independent and (b) follows from $\pmb{\hat{g}}_{jil} = (\beta_{jil} / \beta_{jij}) \pmb{\hat{g}}_{jij}$ and the distribution of $\pmb{\hat{g}}_{jkj}$ explained below \eqref{eq:6}. Since all three terms in the effective noise in \eqref{eq:19} are uncorrelated, we have  $\textrm{var} \left[ w_{il}^{\prime} \right] = P_2^{\rm mrt} + P_3^{\rm mrt} + P_4^{\rm mrt}. $ For the power of the interference due to beamforming gain uncertainty, $P_2^{\rm mrt}$, we write
\par\nobreak \vspace*{-2mm} \begingroup \addtolength{\jot}{-0mm} {\small\begin{align}
P_2^{\rm mrt} &= \sum\nolimits_{j=1}^L    {\rho_{\rm d}}/{\lambda_j^{\rm mrt}}  \;\; \mathbb{E}  \left[ \left\vert \hat{\pmb{g}}_{jil}^{\dagger} \hat{\pmb{g}}_{jij}  - \mathbb{E} \left[ \hat{\pmb{g}}_{jil}^{\dagger} \hat{\pmb{g}}_{jij}  \right] \right\vert^2 \right]  \nonumber  \\
&\hspace{4mm}+  \sum\nolimits_{j=1}^L    {\rho_{\rm d}} / {\lambda_j^{\rm mrt}}  \;\; \mathbb{E} \left[ \left\vert \pmb{\epsilon}_{jil}^{\dagger} \hat{\pmb{g}}_{jij} \right\vert^2 \right]. \label{appB:4}
\end{align}}%
For the first term in \eqref{appB:4}, it is obtained
\par\nobreak \vspace*{-2mm} \begingroup \addtolength{\jot}{-0mm} {\small\begin{align}
&\sum\nolimits_{j=1}^L   \dfrac{\rho_{\rm d}}{\lambda_j^{\rm mrt}}   \mathbb{E}  \left[ \left\vert \hat{\pmb{g}}_{jil}^{\dagger} \hat{\pmb{g}}_{jij}  - \mathbb{E} \left[ \hat{\pmb{g}}_{jil}^{\dagger} \hat{\pmb{g}}_{jij}  \right] \right\vert^2 \right] \nonumber \\
&= \sum\nolimits_{j=1}^L   \dfrac{\rho_{\rm d}}{\lambda_j^{\rm mrt}}   \textrm{var} \left[ \hat{\pmb{g}}_{jil}^{\dagger} \hat{\pmb{g}}_{jij} \right]  = M \sum\nolimits_{j=1}^L   \dfrac{\rho_{\rm d}}{\lambda_j^{\rm mrt}}   \rho_{\rm p} \beta_{jil}^2  \alpha_{jij}^2. \label{appB:6}
\end{align}}%
Similarly, for the second term in \eqref{appB:4}, we obtain
\par\nobreak \vspace*{-2mm} \begingroup \addtolength{\jot}{0mm} {\small\begin{align}
&\sum_{j=1}^L   \dfrac{\rho_{\rm d}}{\lambda_j^{\rm mrt}}   \mathbb{E} \left[ \left\vert \pmb{\epsilon}_{jil}^{\dagger} \hat{\pmb{g}}_{jij} \right\vert^2 \right]  = \sum_{j=1}^L  \dfrac{\rho_{\rm d}}{\lambda_j^{\rm mrt}}   \mathbb{E} \left[   \hat{\pmb{g}}_{jij}^{\dagger} \pmb{\epsilon}_{jil} \pmb{\epsilon}_{jil}^{\dagger} \hat{\pmb{g}}_{jij}   \right] \notag \\
 &= M \sum\nolimits_{j=1}^L   \rho_{\rm d}  \beta_{jil} \left( 1 - \sqrt{\rho_{\rm p}}  \alpha_{jil} \right) \left( \sqrt{\rho_{\rm p}} \beta_{jij} \alpha_{jij} \right) / {\lambda_j^{\rm mrt}}. \label{appB:7}
\end{align}}%
Using \eqref{appB:6} and \eqref{appB:7}, $P_2^{\rm mrt}$ is obtained as in \eqref{eq:22:1}.
Next,  
\par\nobreak \vspace*{-2mm} \begingroup \addtolength{\jot}{0mm} {\small\begin{align}
&P_3^{\rm mrt} = \sum\nolimits_{j=1}^L \hspace{-1mm} \dfrac{\rho_{\rm d}}{\lambda_j^{\rm mrt}}    \sum\nolimits_{k=1, k \neq i}^K  \hspace{-1mm} \left(   \mathbb{E} \left[ \left\vert \hat{\pmb{g}}_{jil}^{\dagger}  \hat{\pmb{g}}_{jkj} \right\vert^2 \right] \! + \! \mathbb{E} \left[ \left\vert \pmb{\epsilon}_{jil}^{\dagger} \hat{\pmb{g}}_{jkj} \right\vert^2  \right] \right)   \notag \\
&=  \sum\nolimits_{j=1}^L    {\rho_{\rm d}} / {\lambda_j^{\rm mrt}}   \sum\nolimits_{k=1, k \neq i}^K  \Big(  M \left( \sqrt{\rho_{\rm p}} \beta_{jil} \alpha_{jil} \right) \left( \sqrt{\rho_{\rm p}} \beta_{jkj} \alpha_{jkj} \right)  \nonumber \\
 &\hspace{4mm}+  M \left( \sqrt{\rho_{\rm p}} \beta_{jkj} \alpha_{jkj} \right) \left( \beta_{jil} - \sqrt{\rho_{\rm p}} \beta_{jil} \alpha_{jil} \right)  \Big) \notag \\
 &=  M  \sum\nolimits_{j=1}^L   {\rho_{\rm d}} / {\lambda_j^{\rm mrt}}    \sum\nolimits_{k=1, k \neq i}^K  \sqrt{\rho_{\rm p}} \beta_{jkj} \alpha_{jkj}  \beta_{jil} . \notag
\end{align}}%
Finally, the power of the noise $w_{il}$ in \eqref{eq:18} is $P_4^{\rm mrt}\!=\!1$.

\vspace{-1mm}
\renewcommand{\baselinestretch}{0.82}
\bibliographystyle{IEEEtran} 
\bibliography{IEEEabrv,RefTCOM}

\begin{thebibliography}{10}
\providecommand{\url}[1]{#1}
\csname url@samestyle\endcsname
\providecommand{\newblock}{\relax}
\providecommand{\bibinfo}[2]{#2}
\providecommand{\BIBentrySTDinterwordspacing}{\spaceskip=0pt\relax}
\providecommand{\BIBentryALTinterwordstretchfactor}{4}
\providecommand{\BIBentryALTinterwordspacing}{\spaceskip=\fontdimen2\font plus
\BIBentryALTinterwordstretchfactor\fontdimen3\font minus
  \fontdimen4\font\relax}
\providecommand{\BIBforeignlanguage}[2]{{%
\expandafter\ifx\csname l@#1\endcsname\relax
\typeout{** WARNING: IEEEtran.bst: No hyphenation pattern has been}%
\typeout{** loaded for the language `#1'. Using the pattern for}%
\typeout{** the default language instead.}%
\else
\language=\csname l@#1\endcsname
\fi
#2}}
\providecommand{\BIBdecl}{\relax}
\BIBdecl

\bibitem{bjornson2019massive}
E.~Bj{\"o}rnson, L.~Sanguinetti \emph{et~al.}, ``Massive {MIMO} is a
  reality-what is next? five promising research directions for antenna
  arrays,'' \emph{Digital Signal Processing}, vol.~94, pp. 3--20, 2019.

\bibitem{motlagh2019performance}
M.~S. Motlagh, S.~Majhi, and P.~Mitran, ``Performance of multi-cell massive
  {MIMO} systems with interference decoding,'' \emph{IEEE Trans. Commun.},
  vol.~68, no.~2, pp. 930--946, 2019.

\bibitem{marzetta2010noncooperative}
T.~L. Marzetta, ``Noncooperative cellular wireless with unlimited numbers of
  base station antennas,'' \emph{{IEEE} Trans. Wireless Commun.}, vol.~9,
  no.~11, pp. 3590--3600, 2010.

\bibitem{yin2016robust}
H.~Yin, L.~Cottatellucci \emph{et~al.}, ``Robust pilot decontamination based on
  joint angle and power domain discrimination,'' \emph{{IEEE} Trans. Signal
  Process.}, vol.~64, no.~11, pp. 2990--3003, 2016.

\bibitem{adhikary2017uplink}
A.~Adhikary, A.~Ashikhmin, and T.~L. Marzetta, ``Uplink interference reduction
  in large-scale antenna systems,'' \emph{{IEEE} Trans. Commun.}, vol.~65,
  no.~5, pp. 2194--2206, 2017.

\bibitem{bjornson2018massive}
E.~Bj{\"o}rnson, J.~Hoydis, and L.~Sanguinetti, ``Massive {MIMO} has unlimited
  capacity,'' \emph{{IEEE} Trans. Wireless Commun.}, vol.~17, no.~1, pp.
  574--590, 2018.

\bibitem{bandemer2011interference}
B.~Bandemer and A.~El~Gamal, ``Interference decoding for deterministic
  channels,'' \emph{{IEEE} Trans. Inf. Theory}, vol.~57, no.~5, pp. 2966--2975,
  2011.

\bibitem{el2011network}
A.~El~Gamal and Y.-H. Kim, \emph{Network Information Theory}.\hskip 1em plus
  0.5em minus 0.4em\relax Cambridge university press, 2011.

\bibitem{bandemer2015optimal}
B.~Bandemer, A.~El~Gamal, and Y.-H. Kim, ``Optimal achievable rates for
  interference networks with random codes,'' \emph{{IEEE} Trans. Inf. Theory},
  vol.~61, no.~12, pp. 6536--6549, 2015.

\bibitem{kim2015adaptive}
K.~T. Kim, S.-K. Ahn \emph{et~al.}, ``Adaptive sliding-window coded modulation
  in cellular networks,'' in \emph{Proc. IEEE Global Commun. Conf.}, 2015.

\bibitem{kim2016interference}
K.~T. Kim, S.-K. Ahn \emph{et~al.}, ``Interference management via
  sliding-window coded modulation for 5{G} cellular networks,'' \emph{IEEE
  Commun. Mag.}, vol.~54, no.~11, pp. 82--89, 2016.

\bibitem{hk}
T.~Han and K.~Kobayashi, ``A new achievable rate region for the interference
  channel,'' \emph{{IEEE} Trans. Inf. Theory}, vol.~27, no.~1, pp. 49--60, Jan
  1981.

\bibitem{etw}
R.~Etkin, D.~Tse, and H.~Wang, ``Gaussian interference channel capacity to
  within one bit,'' \emph{{IEEE} Trans. Inf. Theory}, vol.~54, no.~12, pp.
  5534--5562, Dec 2008.

\bibitem{pathloss}
``{3GPP}; technical specification group radio access network; study on channel
  model for frequencies from 0.5 to 100 {GHz} (release 16),'' \emph{[Online].
  Available: https://portal.3gpp.org/}, 2020.

\end{thebibliography}
\end{document}